\documentclass{article}
\usepackage[utf8]{inputenc}
\usepackage{fancyhdr}
\usepackage{amsmath}
\usepackage{esint}
\usepackage{enumerate}
\usepackage{mathtools}
\usepackage{amsmath,amssymb}
\usepackage{dsfont}
\usepackage[margin=1.0in]{geometry}
\usepackage{graphicx}
\usepackage[mathscr]{euscript}
\usepackage{tikz}
\usepackage{amssymb}
\usepackage{mathrsfs}
\usepackage{pgfplots}
\usepackage{caption}
\usepackage{cleveref}
\usepackage{epstopdf}
\usepackage{float}
\usetikzlibrary{calc,trees,positioning,arrows,fit,shapes,calc}
\usepackage{listings}
\usepackage{amsthm}
\graphicspath{ {Uni/} }
\usepackage[space]{grffile}
\usepackage[citestyle=numeric,bibstyle=numeric,maxbibnames=99,
uniquename=init,autolang=hyphen,backend=biber,sortcites,
sorting=nyt]{biblatex}
\theoremstyle{plain}
\newtheorem{definition}{Definition}
\newtheorem{theorem}[definition]{Theorem}
\newtheorem{remark}[definition]{Remark}
\newtheorem{corollary}[definition]{Corollary}

\DeclarePairedDelimiterX\innerp[2]{\langle}{\rangle}{
	#1, #2
}
\DeclareMathOperator*{\Tr}{Tr}

\begin{document}
	\author{Lukas Junge}
	\title{Propagation of Condensation via Neumann Localization in the Dilute Bose Gas}
	\date{\today}
	\maketitle
\begin{abstract}
	We prove a Neumann localization inequality for the Laplacian that includes a spectral gap. This result is obtained by partitioning a cube into overlapping families of subcubes and analysing the associated projection operators. The resulting operator inequality goes through a discrete Neumann Laplacian on the lattice of boxes and yields a quantitative spectral gap estimate.
	
	As an application, we consider the dilute Bose gas with Neumann boundary conditions. Combining the localization method with recently established free-energy bounds \cite{haberberger2024freeenergydilutebose}, we propagate strong condensation estimates obtained in \cite{freeenergydilutebose} from the Gross–Pitaevskii scale to larger boxes of side length $R\sim a(\rho a^3)^{-\frac{3}{4}-\eta}$.
\end{abstract}	
\section*{Introduction}
Since the birth of mathematical physics the rigorous derivation of Bose–Einstein condensation (BEC) from the microscopic many-body Hamiltonian has remained a central problem. While the ground state energy of the dilute Bose gas is being derived to higher and higher precision, establishing condensation directly from first principles — particularly at positive temperature and on physically relevant length scales — is substantially more delicate. Although the thermodynamic limit is still not within reach, there have been significant contributions in extending the scale at which we can prove BEC, see for instance \cite{LScondensation,fournais2020lengthscalesbecdilute,AnotherGPBEC,chong2025kineticlocalizationpoincaretypeinequalities}.

Recent advances have provided precise control of the free energy and established strong condensation estimates in Neumann boxes whose side lengths are slightly larger than the Gross–Pitaevskii scale; see in particular \cite{haberberger2024freeenergydilutebose} and an extension in \cite{freeenergydilutebose}. These results yield quantitative bounds on the number of excited particles, but only on relatively short spatial scales.

The purpose of this paper is to extend such condensation estimates to longer length scales similar to what was done in the periodic setting in \cite{fournais2020lengthscalesbecdilute} based upon methods of \cite{FS,BSol}. Contrary to \cite{fournais2020lengthscalesbecdilute} this is done in the Neumann setting, which is a non trivial variation. This technique also leads to a kinetic operator significantly simpler than what was used in \cite{FS,FS2}. 
In \cite{chong2025kineticlocalizationpoincaretypeinequalities} the authors did something similar, however they needed to sacrifice a fraction of the kinetic energy which had significant cost to their condensation estimates, this we avoid completely here.

The key technical input is a localization technique: by partitioning the domain into overlapping families of boxes and analysing the associated projection operators, we derive a lower bound that uses a discrete Neumann Laplacian with vertices being the boxes and edges connecting neighbouring boxes. Spectral analysis of this operator yields an estimate, which can be iterated across scales. The result is optimal in the scale parameter. 

This Neumann localization allows us to propagate the strong condensation estimates obtained on the Gross–Pitaevskii scale \cite{freeenergydilutebose} to weak condensation estimates on significantly larger boxes. In particular, we show that condensation persists on scales $R\sim a(\rho a^3)^{-3/4-\eta}$ and at temperatures $T\sim \rho a$. Thereby extending the regime in which BEC can be rigorously established. This is done in the final section.

\section*{The Neumann localisation}
We consider the Neumann Laplacian $-\Delta_{\Lambda_R}$ on a box $\Lambda_R=[0,R]^3$. Its lowest eigenvalue is $0$ with corresponding eigenvector the constant function $1_{\Lambda_R}$. Let $P$ denote the projection onto the supspace spanned by $1_{\Lambda_R}$ and $Q=1-P$. For any subset $A\subset \Lambda_R$, we denote $P_A$ the projection onto constant functions on $A$ and $Q_A=1_A-P_A$. Our goal is to subdivide $\Lambda_R=\sqcup_i A_i$  such that
\begin{equation}\label{what we really want}
	-\Delta_{\Lambda_R}-\frac{Q}{R^2}\geq \sum_{i}\left(-\Delta_{A_i}-c\lambda_1(A_i)Q_{A_i}\right),
\end{equation}
with $\lambda_1$ being the first non-zero eigenvalue of $-\Delta_{A_i}$ and $c>0$ a universal constant.
In a quadratic form sense, we have the operator inequality 
\[-\Delta_{\Lambda_R}\geq \sum_{i}-\Delta_{A_i}.\]
Thus in order to establish \cref{what we really want} it would suffice to show that
\begin{equation*}
\frac{Q}{R^2}\leq \sum_{i}c\lambda_1(A_i)Q_{A_i}.
\end{equation*}
However, this inequality is false in general, since all the operators $Q_{A_i}$ may vanish simultaneously even when the function is not globally constant.To overcome this issue, we introduce multiple overlapping partitions of $\Lambda_R$. The overlaps ensure that one can instead prove an inequality of the form
\begin{equation}\label{some inequality}
	\frac{Q}{R^2}\leq \sum_{k}\sum_{i}c\lambda_1(A_{i,k})Q_{A_{i,k}}.
\end{equation}
In what follows, we will restrict our attention to the case where the sets $A$ are cubes, as this is the only shape directly applicable to the Bose gas. Since the target inequality is scale-invariant, it suffices to establish it for $R=1$

Theorem~\ref{theorem easy} uses two partitions in an elegant and optimal way. However the boundary sets of one of the partitions are not cubes and the theorem is therefore not directly applicable to the Bose gas. In Theorem~\ref{theorem hard} we consider many partitions consisting only of cubes. While the underlying idea remains the same, the additional technicalities make the argument somewhat less transparent.
\begin{theorem}\label{theorem easy}
	For $\ell^{-1}\in \mathbb{N}$, we consider the two partitions
	\begin{equation}
		\bigcup_{k\in \{0,1,..,\ell^{-1}-1\}^3}B_{\ell,k}=[0,1]^3=	\bigcup_{k\in \{0,1,..,\ell^{-1}-1\}^3}B_{\ell,k}^{shift}.
	\end{equation}
where 
\begin{equation*}
	B_{\ell,k}=\prod_{i=1} ^3[k_i\ell,(k_i+1)\ell],\qquad B_{\ell,k}^{shift}=\prod_{i=1} ^3[(k_i+\frac{1}{2}1_{k_i>  0})\ell,(k_i+1+\frac{1}{2}1_{k_i<\ell^{-1}-1})\ell].
\end{equation*}
Then for all $\ell^{-1}\in \mathbb{N}$, the following operator inequality holds on $L^2([0,1]^3)$:
\begin{equation}\label{only relevant equation}
\sum_{k}Q_{B_{\ell,k}}+Q_{B_{\ell,k}^{shift}}\geq \frac{1-\cos\pi \ell}{4+1-\cos\pi \ell}Q.
\end{equation}
Note that the sets $B_{\ell,k}^{shift}$ are not all cubes: intervals adjacent to the boundary at $1$ have length $\frac{\ell}{2}$, while those adjacent to the boundary at $0$ have length $\frac{3\ell}{2}$.
\end{theorem}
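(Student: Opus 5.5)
We test the inequality against $f\in L^2([0,1]^3)$. Every $Q_B$ annihilates constants, and $Q$ is the projection orthogonal to constants, so both sides of \eqref{only relevant equation} depend only on $Qf$. Hence we may assume $\int f=0$, i.e.\ $Qf=f$, and must show
\[
\sum_k\|Q_{B_{\ell,k}}f\|^2+\|Q_{B^{shift}_{\ell,k}}f\|^2\geq \frac{\lambda}{4+\lambda}\|f\|^2,\qquad \lambda:=1-\cos\pi\ell .
\]
We split $f=g+h$ with $g=\sum_k P_{B_{\ell,k}}f$ and $h=\sum_k Q_{B_{\ell,k}}f$. Here $g$ is piecewise constant, taking the value $a_k$ on the cell $B_{\ell,k}$, and the two pieces are orthogonal. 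This gives $\|f\|^2=\|g\|^2+\|h\|^2$ and $\sum_k\|Q_{B_{\ell,k}}f\|^2=\|h\|^2$. Moreover $\int g=\int f=0$, so $\sum_k a_k=0$.

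For the shifted partition, fix $t\in(0,1)$ and use the elementary inequality $\|u+v\|^2\geq t\|u\|^2-\frac{t}{1-t}\|v\|^2$. Applied with $u=Q_{B'}g$ and $v=Q_{B'}h$, together with $\|Q_{B'}h\|\leq\|1_{B'}h\|$ and the fact that the shifted boxes are disjoint up to null sets, it gives
\[
\sum_{B'}\|Q_{B'}f\|^2\geq t\,X-\frac{t}{1-t}\|h\|^2,\qquad X:=\sum_{B'}\|Q_{B'}g\|^2 .
\]
So the left-hand side is at least $\bigl(1-\frac{t}{1-t}\bigr)\|h\|^2+tX$. The key claim is the discrete spectral-gap bound
\[
X\geq \frac{\lambda}{4}\,\|g\|^2 .
\]

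To prove this claim, I will use that both the cell partition and the shifted partition are products of one-dimensional partitions. In 1D the shifted intervals are:
\begin{itemize}
\item the interval $[0,\tfrac32\ell]$, which covers cell $0$ and half of cell $1$;
\item interior intervals, each covering the second half of cell $i$ and the first half of cell $i+1$;
\item the interval $[(\ell^{-1}-\tfrac12)\ell,1]$, which is half of the last cell.
\end{itemize}
The variance of a step function over an interior interval equals $\frac{\ell}{4}(a_i-a_{i+1})^2$. The first interval gives the comparable bound $\frac{\ell}{3}(a_0-a_1)^2$. Summing, the 1D form dominates $\frac{\ell}{4}\sum_i(a_i-a_{i+1})^2$, which is $\ell$ times $\frac14$ of the discrete Neumann Laplacian on the path $\{0,\dots,\ell^{-1}-1\}$.

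In 3D, write the variance over a product box via the law of total variance in each coordinate. This shows $\|Q_{B'}g\|^2$ dominates the sum over the three directions of the one-dimensional variances, averaged over the other coordinates within $B'$. Summing over $B'$ yields $X\geq \frac{\ell^3}{4}\langle a,(-\Delta_{\rm disc})a\rangle$, where $-\Delta_{\rm disc}$ is the Neumann Laplacian on the grid $\{0,\dots,\ell^{-1}-1\}^3$. That operator has first nonzero eigenvalue $2(1-\cos\pi\ell)$ with eigenvector $\cos(\pi\ell(k_1+\tfrac12))$. Since $\sum_k a_k=0$, we get $X\geq \frac{\ell^3}{2}\lambda\sum_k a_k^2=\frac{\lambda}{2}\|g\|^2$, which leaves room to absorb the boundary-interval losses.

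Finally we optimise over $t$. We need $1-\frac{t}{1-t}\geq c$ and $t\lambda/4\geq c$. Balancing the two, $t=\frac{4}{4+\lambda}\cdot\frac{\lambda}{4}\cdot\frac{4}{\lambda}\,\cdot$ suitably chosen, produces a constant of the form $\frac{\lambda}{4+\lambda}$ (up to the precise constant in the claim). The main obstacle I expect is getting exactly the constant $\frac{\lambda}{4+\lambda}$. This depends on the sharp comparison between $X$ and the discrete Laplacian, including the asymmetric boundary intervals of lengths $\frac32\ell$ and $\frac12\ell$ and the cross terms in the 3D variance decomposition. I would first verify it in 1D, where the structure is a tridiagonal matrix, and then lift to 3D by the tensor-product structure.
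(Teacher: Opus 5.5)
Your argument breaks at the lift from one to three dimensions. The claim that $\|Q_{B'}g\|^2$ dominates the sum over the three directions of the one-dimensional variances (averaged over the other coordinates) is the Efron--Stein inequality read backwards. For a product measure one has $\mathrm{Var}(g)\le\sum_i\mathbb{E}\,\mathrm{Var}_{x_i}(g)$. Iterating the law of total variance only gives directional variances of successively \emph{averaged} functions, not of $g$ itself. So the intermediate bound $X\ge\frac{\ell^3}{4}\langle a,(-\Delta_{\mathrm{disc}})a\rangle$ is false. Take $a_k=(-1)^{k_1+k_2}$. Since each $Q_{B'}$ is a projection on $L^2(B')$, you get $X\le\|g\|^2=1$, whereas $\frac{\ell^3}{4}\langle a,(-\Delta_{\mathrm{disc}})a\rangle=2(1-\ell)>1$ for $\ell<\frac12$. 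Locally the failure is already visible: on an interior shifted box this $g$ is $\sigma\mapsto\pm(-1)^{\sigma_1+\sigma_2}$ on the $2\times2\times2$ block of half-cells, which has variance $1$ but averaged directional variances summing to $2$.

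There are two ways to fill this step.
\begin{enumerate}
\item The paper's route: write $\|Q_{B'}g\|^2=\frac{1}{2|B'|}\sum_{k,s}|B'\cap B_{\ell,k}|\,|B'\cap B_{\ell,s}|\,(a_k-a_s)^2$ and keep the face- and body-diagonal pairs. This gives the lattice Laplacian with weights $\frac1{16},\frac1{32},\frac1{64}$, whose gap $\frac{1-\cos\pi\ell}{2}$ is attained at $m=(1,0,0)$. Three quarters of that gap comes from the diagonal edges; nearest neighbours alone give only $\frac{1-\cos\pi\ell}{8}$.
\item A tensor-product argument at the operator level. Both partitions are products, so $P_\ell P^{shift}_\ell P_\ell=T\otimes T\otimes T$, where $T$ is the one-dimensional operator. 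We have $0\le T\le1$ and $T1=1$, and your correct 1D computation gives $T\le1-\frac\lambda2$ on mean-zero step functions. Hence every eigenvalue $t_{m_1}t_{m_2}t_{m_3}$ with $m\neq0$ is at most $1-\frac\lambda2$, which gives $X\ge\frac\lambda2\|g\|^2$.
\end{enumerate}
The second route also avoids the boundary bookkeeping for diagonal edges.

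Second, your key claim as stated, $X\ge\frac\lambda4\|g\|^2$, would not give the theorem's constant. Taking $g=0$ shows the coefficient $1-\frac{t}{1-t}$ of $\|h\|^2$ must be positive. Then $t<\frac12$, so the coefficient of $\|g\|^2$ is at most $\frac{t\lambda}{4}<\frac\lambda8<\frac{\lambda}{4+\lambda}$. You need the full gap $\frac\lambda2$, and it is not slack: there are no boundary losses to absorb, because the first interval has weight $\frac\ell3\ge\frac\ell4$ and every path edge is covered. With the full gap, choose $t=\frac{2}{4+\lambda}$. This gives $tX\ge\frac{\lambda}{4+\lambda}\|g\|^2$ and $1-\frac{t}{1-t}=\frac{\lambda}{2+\lambda}\ge\frac{\lambda}{4+\lambda}$, which is exactly the statement. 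After these repairs your splitting is essentially the paper's bound $P_\ell Q^{shift}_\ell P_\ell\le2(Q_\ell+Q^{shift}_\ell)$ combined with $Q_\ell+Q^{shift}_\ell\ge Q_\ell$.
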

\begin{remark}
	The constant appearing in \cref{only relevant equation} is optimal in the limit $\ell \rightarrow 0$. Indeed by defining the function
	\begin{equation}
u=\sum_{s=0}^{2\ell^{-1}-1}cos(\frac{\pi \ell}{2}(s+\frac{1}{2}))1_{s\frac{\ell}{2}\leq x_1\leq (s+1)\frac{\ell}{2}},
	\end{equation}
and evaluating \cref{only relevant equation} on $u$ yields
\begin{equation}
	\langle u, \sum_{k}\left(Q_{B_{\ell,k}}+Q_{B_{\ell,k}^{shift}} \right)u\rangle\geq (\frac{\pi^2\ell^2}{8}- O(\ell^3))\Vert Qu\Vert^2.
\end{equation}
Which establishes the optimality of \cref{only relevant equation}.
\end{remark}
\begin{proof}
Since both sides of \cref{only relevant equation} vanish on constant functions, it suffices to verify the inequality on the orthogonal complement of constant functions. Because the boxes $B_{\ell,k}$ are disjoint, up to measure zero, the operators
\begin{equation}\label{definition of Qell 1}
	Q_{\ell}:=\sum_kQ_{B_{\ell,k}},\qquad Q_{\ell}^{shift}:=\sum_kQ_{B_{\ell,k}^{shift}}
\end{equation} 
are projections. We define $P_{\ell}$ and $P_{\ell}^{shift}$ analogously, naturally $Q_{\ell}=1-P_{\ell}$. By Cauchy-Schwartz inequality we find
\begin{equation}\label{writing Q+Q as Q-Q squared}
P_{\ell}Q_{\ell}^{shift}P_{\ell}=Q_{\ell}^{shift}-Q_{\ell}Q_{\ell}^{shift}-Q_{\ell}^{shift}Q_{\ell}+Q_{\ell}Q_{\ell}^{shift}Q_{\ell}\leq 2(Q_{\ell}+Q_{\ell}^{shift}).
\end{equation}
Thus we are reduced to studying the operator
\begin{equation}\label{definition of L}
	L(\ell)=P_{\ell}Q_{\ell}^{shift}P_{\ell}=P_{\ell}-P_{\ell}P_{\ell}^{shift}P_{\ell}.
\end{equation}
The operator $P_{\ell}$ maps $L^2([0,1]^3)$ onto the subspace of functions that are constant on each box $B_{\ell,k}$. Explicitly,
\begin{equation*}
	P_{\ell}(\psi)=\sum_k\left(\frac{1}{\ell ^3}\int_{B_{\ell,k}}\psi dx \right)1_{B_{\ell,k}}
\end{equation*}
This yields a natural isometry between $P_{\ell}L^2([0,1]^3)$ and $\ell^{2}(\{0,1,..,\ell^{-1}-1\}^3)$, given by
\begin{equation*}
	(G\psi)(k)=\frac{1}{\ell^3}\int_{B_{\ell,k}}\psi dx,\qquad k\in \{0,1,..,\ell^{-1}-1\}^3.
\end{equation*}
 Through this isometry we view $L(\ell)$ as an operator on $\ell^{2}(\{0,1,..,\ell^{-1}-1\}^3)$. A direct computation yields
\begin{align*}
	(L(\ell) f)(k)&=\sum_{s}\sum_{s'} \frac{\vert B_{\ell,k}\cap B_{\ell,s'}^{shift}\vert \vert B_{\ell,s}\cap B_{\ell,s'}^{shift}\vert}{\ell^3\vert B_{\ell,s'}^{shift}\vert} (f(k)-f(s))
\end{align*}
Evaluating the overlaps explicitly, one finds for $k$ away from the boundary,
\begin{align}\label{main equation}
	(L(\ell) f)(k)&=\frac{1}{16}\left(\sum_{\vert k-s\vert_1= 1} f(k)-f(s)+\frac{1}{2}\sum_{\substack{\vert k-s\vert_{\infty}=1\\ \vert k-s\vert_1= 2}}f(k)-f(s)+\frac{1}{4}\sum_{\substack{\vert k-s\vert_{\infty}=1\\ \vert k-s\vert_1= 3}}f(k)-f(s)\right).
\end{align}
One can easily check that the edges connecting to the boundary have higher weights. Therefore,  as operators on $\ell^2(\{0,1,..,\ell^{-1}-1\}^3)$, we find
\begin{equation}
	L(\ell)\geq -\Delta^{Graph}
\end{equation}
with $-\Delta^{Graph}$ being $1/16$ times the free/Neumann Laplacian on the cubic lattice, with edges of weight $1$ between nearest neighbours, edges of weight $\frac{1}{2}$ between "face diagonal" vertices, and edges of weight $\frac{1}{4}$ between "body diagonal" vertices. $-\Delta^{Graph}$  is diagonal in the discrete Neumann basis. More precisely,
\begin{equation}\label{diagonal laplacian}
	\begin{aligned}
		&-\Delta^{Graph}\phi_m=\lambda_m\phi_m,\quad \text{with}\quad \phi_m(k)=\prod_{i=1}^3\cos({m_i\pi\ell}(k_i+\frac{1}{2})),\qquad m\in \{0,1,2,..,\ell^{-1}-1\}^3\\& \lambda_m=\frac{1}{8}\sum_i(1-\cos({\pi m_i\ell}))+\frac{1}{8}\sum_{i<j}(1-\cos({\pi m_i\ell})\cos({\pi m_j\ell})+\frac{1}{8}\left(1-\prod_{i}\cos({\pi m_i\ell})\right).
	\end{aligned}
\end{equation}
We find on the space orthogonal to constant functions (i.e excluding $\phi_{(0,0,0)}$) the lowest eigenvalue is attained at $m=(1,0,0)$, and we obtain
\begin{equation*}
	L(\ell)\geq -\Delta^{graph}\geq  \lambda_{(1,0,0)}P_{\ell}=\frac{(1-\cos\pi\ell)}{2}P_{\ell}
\end{equation*}
The above together with \cref{definition of L} and \cref{writing Q+Q as Q-Q squared} yields
\begin{equation*}
	Q_\ell+Q_{\ell}^{shift}\geq \frac{(1-\cos\pi\ell)}{4}P_{\ell}=\frac{(1-\cos\pi\ell)}{4}(1-Q_{\ell})\implies Q_\ell+Q_{\ell}^{shift}\geq \frac{1-\cos\pi \ell}{4+1-\cos \pi \ell}
\end{equation*}
on functions orthogonal to constant functions. 
\end{proof}

Since the results of \cite{freeenergydilutebose} have only been proven for cubes the previous theorem is not directly applicable. The following Theorem remedies this by considering families of partitions consisting exclusively of cubes of varying sizes. By varying the side lengths, we obtain sufficient overlap to recover an inequality of the same form. Although the construction is more technical, it allows for a direct application to the dilute Bose gas without further modifications.

\begin{theorem}\label{theorem hard}
	Given $\ell^{-1}\in \mathbb{N}$ we define
	\begin{equation}\label{formula for ell n}
		\ell_n=\frac{\ell}{1-n\ell},\qquad 0\leq n\leq \frac{1}{2}\ell^{-1}
	\end{equation}
That is, the values $\ell_n$'s are all the inverse integers between $\ell$ and $2\ell$. For each $\ell_n$ we consider the partition into cubes
\begin{equation*}
		\bigcup_{k\in \{0,1,..,\ell_n^{-1}-1\}^3}B_{\ell_n,k}=[0,1]^3,\qquad\qquad 	B_{\ell_n,k}=\prod_{i=1}^3[k_i\ell_n,(k_i+1)\ell_n].
\end{equation*}
We also define
\begin{equation}\label{definion of Qell}
	Q_{\ell_n}=\sum_{ k\in \{0,1,..,\ell_n^{-1}-1\}^3}Q_{B_{\ell_n,k}}
\end{equation}
Then, there exist a constant $C>0$ independent of $\ell$, such that the following operator inequality holds on $L^2([0,1]^3)$:
\begin{equation}\label{nice inequality}
\ell\sum_{n=0}^{\lfloor\frac{1}{2\ell}\rfloor}Q_{\ell_n}\geq C\ell^2Q.
\end{equation}

\end{theorem}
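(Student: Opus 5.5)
The plan is to run the argument of Theorem~\ref{theorem easy} with every partition $\{B_{\ell_n,k}\}_k$ used in turn as the reference partition, and to use the other partitions $\{B_{\ell_m,k}\}_k$ to generate the overlaps. Write $N=\lfloor \frac{1}{2\ell}\rfloor$ and $M:=\ell\sum_{n=0}^{N}Q_{\ell_n}$. Let $P_{\ell_n}=1-Q_{\ell_n}$. Each $Q_{\ell_n}$ is a projection, since the cubes $B_{\ell_n,k}$ are disjoint up to null sets.

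\textbf{Step 1: reduce to a doubly averaged operator.} The Cauchy--Schwarz estimate \eqref{writing Q+Q as Q-Q squared} holds for any two such projections, so
\[
P_{\ell_n}Q_{\ell_m}P_{\ell_n}\le 2\,(Q_{\ell_n}+Q_{\ell_m}).
\]
Averaging over both $n$ and $m$ with weight $\ell^2$, and using $\ell(N+1)\le 1$, gives
\[
\ell^2\sum_{n,m}P_{\ell_n}Q_{\ell_m}P_{\ell_n}\le 4M .
\]
Set $L_n:=\ell\sum_m P_{\ell_n}Q_{\ell_m}P_{\ell_n}=\ell\sum_m\bigl(P_{\ell_n}-P_{\ell_n}P_{\ell_m}P_{\ell_n}\bigr)$. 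The goal is a bound
\[
L_n\ge c\,\ell^2 P_{\ell_n} \quad\text{on the orthogonal complement of the constants.}
\]
Given this, summing gives $4M\ge c\ell^2\cdot\ell\sum_n P_{\ell_n}$. Hence $(4+c\ell^2)M\ge c\ell^2\cdot\ell(N+1)\,Q$, and $\ell(N+1)\ge \frac12$, which is \eqref{nice inequality}.

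\textbf{Step 2: identify $L_n$ with a graph Laplacian.} Via the isometry $G$ at scale $\ell_n$, the computation leading to \eqref{main equation} gives
\[
(L_nf)(k)=\sum_s w_n(k,s)\,\bigl(f(k)-f(s)\bigr),
\qquad
w_n(k,s)=\ell\sum_m\sum_{s'}\frac{|B_{\ell_n,k}\cap B_{\ell_m,s'}|\,|B_{\ell_n,s}\cap B_{\ell_m,s'}|}{\ell_n^3\,|B_{\ell_m,s'}|}.
\]
All weights are nonnegative. It therefore suffices to show $w_n(k,s)\ge c$ for every nearest-neighbour pair $|k-s|_1=1$, with $c$ uniform in $n,k$ and $\ell$. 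Then $L_n\ge c\,(-\Delta^{\rm Neu}_{\ell_n^{-1}})$, and the explicit Neumann spectrum \eqref{diagonal laplacian} gives a first nonzero eigenvalue $\gtrsim 1-\cos(\pi\ell_n)\gtrsim \ell^2$.

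For a face $x_1=j\ell_n$ between $k$ and $s=k+e_1$, consider an $m$-cube whose $x_1$-interval contains $j\ell_n$ with margin at least $\delta\ell$ on both sides. Its overlaps with $B_{\ell_n,k}$ and $B_{\ell_n,s}$ in the $x_1$-direction are then $\ge\delta\ell$. In the transverse directions, the $m$-cubes partition the $n$-cube's cross-section, and summing over $s'$ recovers a fixed fraction of $\ell_n^2$ by Cauchy--Schwarz. This summation over the $m$-cubes $s'$ is how the transverse directions are handled. The contribution of each such $m$ is therefore $\gtrsim\delta^2$, so $w_n\gtrsim \delta^2\cdot\ell\cdot\#\{m:\ \mathrm{dist}(j\ell_n,\ell_m\mathbb Z)\ge\delta\ell\}$.

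\textbf{Step 3 (main obstacle): equidistribution of the grids.} The remaining task is to show that, for every interior face position $x=j\ell_n\in(0,1)$, a positive fraction of the indices $m\le N$ satisfy
\[
\Bigl\|\frac{x}{\ell_m}\Bigr\|_{\mathbb R/\mathbb Z}\ge \delta .
\]
Since $\frac{x}{\ell_m}=\frac{x}{\ell}-mx$, this is a question about the orbit of the rotation by $x$ over roughly $\frac{1}{2\ell}$ steps.
\begin{itemize}
\item If $\|x\|$ is not small, or $x$ is close to a rational with denominator $q\ge3$, the orbit visits arcs of length $\gtrsim 1/q$ regularly.
\item The delicate cases are $x$ near $\frac12$, where the orbit alternates between two points, and $x$ near $1$, with $1-x\gtrsim\ell$, where the orbit drifts slowly by $m(1-x)$.
\end{itemize}
I would handle these by a direct case analysis. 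For $x$ near $1-r\ell$, the drift $mr\ell$ sweeps a full unit interval once $m\gtrsim 1/(r\ell)$, and the slow drift itself works in our favour. For $x$ near $\frac12$, exactly one of the two alternating values is away from $0$ unless $\frac{x}{\ell}$ is itself near an integer or half-integer. In that case I would use the fact that $x=j\ell/(1-n\ell)$ is not exactly $\frac12$ to get a slow secondary drift. If a uniform $\delta$ fails in some thin set of cases, a fallback is to lose only a constant by replacing nearest-neighbour edges with paths of length two through diagonal neighbours, which still gives a Laplacian comparable to the Neumann one. This number-theoretic step is where I expect the real work to lie.
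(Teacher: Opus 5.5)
Your Steps 1 and 2 follow the paper's proof: the pairwise bound $P_{\ell_n}Q_{\ell_m}P_{\ell_n}\le 2(Q_{\ell_n}+Q_{\ell_m})$, averaging over $(n,m)$, viewing the averaged operator at a reference scale as a weighted graph Laplacian on the $\ell_n$-lattice, keeping only nearest-neighbour edges, and reducing a face weight to one-dimensional overlaps. Your transverse Cauchy--Schwarz bound and the bound $w_n\gtrsim\delta^2\ell\,\#\{\text{good } m\}$ are fine. Your normalization $\sum_m P_{\ell_n}Q_{\ell_m}P_{\ell_n}\ge c\,\ell\, P_{\ell_n}$ is the right one; the paper's corresponding displays carry $C/\ell$ where $C\ell$ is meant.

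\textbf{The gap is Step 3.} It carries the whole content, and you leave it as an unquantified case analysis that contains false steps.
\begin{itemize}
\item For $x\approx\frac12$ you want to use that $x=j\ell_n$ is never exactly $\frac12$. It is, whenever $\ell_n^{-1}=\ell^{-1}-n$ is even and $j=\ell_n^{-1}/2$. The case needs no rescue anyway: the two alternating values $\frac{x}{\ell}$ and $\frac{x}{\ell}-\frac12$ (mod $1$) are antipodal, so one of them is always $\frac14$-far from $\mathbb{Z}$.
\item For $x$ near $0$ or $1$ (e.g.\ $x=1-\ell_n$), the total drift over $0\le m\le N$ is $N\|x\|$, which can be as small as about $\frac12$. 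So it never sweeps a full period.
\item ``Close to a rational with denominator $q\ge 3$'' is not made uniform when $q$ is comparable to $N$.
\item The diagonal-path fallback cannot help. Every edge crossing the hyperplane $\{x_1=j\ell_n\}$ gets its weight only from $m$-cubes straddling it, and each such $m$ contributes $O(\ell\,\|x/\ell_m\|)$. So no rerouting avoids a bad face.
\end{itemize}

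\textbf{The missing observation, which the paper uses.} The face lies on the reference grid, so $x/\ell-nx=j\in\mathbb{Z}$ and hence $\|x/\ell_m\|=\|(n-m)x\|$. Since $1\le j\le \ell_n^{-1}-1$, you have $\|x\|\ge\ell_n\ge\ell$. What you need is that a fixed fraction of the $J=N+1$ consecutive integers $i=n-m$ satisfy $\|ix\|\ge\delta$.

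The paper only asserts this. A case-free proof is Dirichlet's theorem: pick coprime $p$ and $1\le q\le J$ with $|x-p/q|<1/(qJ)$.
\begin{itemize}
\item If $q\ge 2$: on each block of $q$ consecutive $m$, the points $(n-m)x$ lie within $1/J$ of a translate of $\frac{1}{q}\mathbb{Z}$. So at most $2q(\delta+\frac1J)+1$ of them are $\delta$-close to $\mathbb{Z}$. Full blocks cover at least half of $\{0,\dots,N\}$.
\item If $q=1$: then $\ell\le\|x\|<1/J$. The points form an arithmetic progression with step $\|x\|\ge\ell$ and total length $<1$. So at most $2\delta/\ell+2$ of them are $\delta$-close to $\mathbb{Z}$.
\end{itemize}
With $\delta=\frac1{16}$ and $\ell\le\frac1{32}$, both cases give at least $J/8$ good $m$, hence $w_n\gtrsim 1$ uniformly. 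For the finitely many larger $\ell$, it suffices that each edge has positive weight, which the choice $m=n\pm1$ guarantees. With this lemma in place of your Step 3 sketch, your argument closes.
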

\begin{proof}
	Using the same inequality as in \cref{writing Q+Q as Q-Q squared} we have for each pair $\ell_n$, $\ell_m$
	\begin{equation*}
		P_{\ell_n}Q_{\ell_m}P_{\ell_n}\leq 2(Q_{\ell_n}+Q_{\ell_m}),
	\end{equation*}
summing over $n$ and $m$ yields
	\begin{equation*}
	\sum_{n=0}^{\lfloor\frac{1}{2\ell}\rfloor}Q_{\ell_n}\geq \frac{C}{\ell}\sum_{m,n} P_{\ell_m}Q_{\ell_n}P_{\ell_m}.
	\end{equation*}
Thus, it suffices to prove that for each $m$;
\begin{equation}\label{second wanted equation}
L(\ell_m)=\sum_{n} P_{\ell_m}Q_{\ell_n}P_{\ell_m}\geq \frac{C}{\ell}P_{\ell_m}.
\end{equation}
Proceeding as in \cref{main equation} we represent $L(\ell_m)$ as the discrete operator:
\begin{equation}
(L(\ell_m)f)(k)=\sum_{n}\sum_{s}\sum_{s'}\frac{\vert B_{\ell_m,k}\cap B_{\ell_n,s'}\vert \vert B_{\ell_m,s}\cap B_{\ell_n,s'}\vert}{\ell_m^3\ell_n^3} (f(k)-f(s)).
\end{equation}
We bound $L(\ell_m)$ from below by restricting to nearest-neighbour edges, i.e. $\vert k-s\vert=1$, and simply bound the diagonal edges by $0$. Without loss of generality assume $k_1\neq \ell_m^{-1}-1$ and consider the weight of the edge between $k$ and $k+e_1$.
	\begin{align}
	&\sum_{n}\sum_{s'}\frac{\vert B_{\ell_m,k}\cap B_{\ell_n,s'}\vert \vert B_{\ell_m,k+e_1}\cap B_{\ell_n,s'}\vert}{\ell_m^3\ell_n^3}\nonumber\\&\geq C \sum_{n}\sum_{s_1'=0}^{\ell_n^{-1}-1} \frac{\vert [k_1\ell_m ,(\ell_m+1)k_1]\cap[s_1'\ell_n,(s_1'+1)\ell_n]\vert [(k_1+1)\ell_m ,(\ell_m+2)k_1]\cap[s_1'\ell_n,(s_1'+1)\ell_n]\vert}{\ell^2}. \nonumber
	\end{align}
Using the expression $\ell_n=\frac{\ell}{1-n\ell}$ the above expression can be bounded from below by
\begin{equation*}
C\sum_n\sum_{s_1'}\left(\frac{k+1}{1-m\ell}-\frac{s_1'}{1-n\ell}\right)_+\left(\frac{s_1'+1}{1-n\ell}-\frac{k+1}{1-m\ell}\right)_+.
\end{equation*}
The inner sum over $s_1'$ can be estimated by the following:
\begin{equation}
	\sum_{s_1'}\left(...\right)\geq 
	 Cd\left( \frac{(k+1)(1-n\ell)}{1-m\ell},\mathbb{Z}\right)=Cd\left(\frac{(k+1)(m-n)}{\ell^{-1}-m},\mathbb{Z}\right),
\end{equation}
where $d(\cdot, \mathbb{Z})$ is distance to nearest integer. Lastly since $\frac{k_1+1}{\ell^{-1}-m}$ is not an integer, and its distance to an integer is at least $\ell_m\geq \ell$ we may conclude
\begin{equation*}
	\sum_{n=0}^{\lfloor\frac{1}{2\ell}\rfloor}d\left(\frac{(k+1)(m-n)}{\ell^{-1}-m},\mathbb{Z}\right)\geq C\ell^{-1}.
\end{equation*}
It follows that $L(\ell_m)$ dominates the nearest neighbour discrete Laplacian with edge weight $C\ell^{-1}$. Using the same diagonalization as in \cref{diagonal laplacian} yields \cref{second wanted equation}. 
\end{proof}
\section*{Application to the dilute Bose gas}
We consider the dilute Bose gas described by the Hamiltonian
\begin{equation}\label{hamiltonnian}
	H_N=\sum_{i=1}^N-\Delta_i+\sum_{i<j}v(x_i-x_j),
\end{equation}
for a non negative potential $v$. $H_N$ acts on $L^2([0,L]^{3N})$ with Neumann boundary conditions. The gas is considered in the dilute regime where $\frac{N}{L^3}=\rho \ll a^{-3}$, with $a$ being the scattering length of $v$, see \cite{greenbook} for definitions of $a$. We study the free energy of the system and the occurrence of Bose–Einstein condensation in the corresponding Gibbs state. The free energy is defined by
\begin{equation}\label{free energy definition}
	\inf_{\Gamma\geq0, \Tr(\Gamma)=1} Tr(H_N\Gamma)+T\Tr(\Gamma\log(\Gamma))=-T\log \Tr(e^{-\beta H_N})
\end{equation}
where the equality follows by evaluating the functional at the Gibbs state 
\[\Gamma_0=\frac{1}{z}e^{-\beta H_N}.\]
To quantify condensation, we define $Q$ as the projection orthogonal to constant functions on $L^2([0,L]^3)$ and set
\begin{equation}
	n_+=\sum_{i=1}^NQ_i,
\end{equation}
where $Q_i$ acts on the $i$'th particle. We say $\Gamma_0$ exhibits condensation if
\begin{equation}
	\Tr(\Gamma_0 n_+)\leq (1-\varepsilon)N
\end{equation}
for some universal $\varepsilon>0$. This definition is consistent with standard notions of Bose–Einstein condensation; see, for instance, \cite{PhysRev.104.576}.

As mentioned in the introduction we are not able to prove BEC in the thermodynamic limit, instead we consider regimes in which $N$ and $\rho a^3$ are coupled by letting $L=af(\rho a^3)$.

In \cite{freeenergydilutebose} the following result was established.
\begin{theorem}\label{main bose theorem}
	Let $v\geq 0$ be radially decreasing and compactly supported, then there exist $C,\eta>0$ such that for $L= a (\rho a^3)^{-\frac{1}{2}-\eta}$ and $T\leq \rho a (\rho a^3)^{-\eta}$ we have
	\begin{align}\label{main bose equation}
		-T \log \Tr e^{-\beta (H_N-\eta\frac{n_+}{L^2})}&\geq 4\pi \rho a N(1+\frac{128}{15\sqrt{\pi}}\sqrt{\rho a^3}-C(\rho a^3)^{\frac{1}{2}+\eta})\nonumber\\&+L^3\frac{T^{\frac{5}{2}}}{(2\pi)^3}\int_{\mathbb{R}^3}\log (1-e^{-\sqrt{p^4+\frac{16\pi p^2 \rho a}{T}}})dp 
	\end{align}
\end{theorem}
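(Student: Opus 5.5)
This statement is quoted from \cite{freeenergydilutebose}, so the most economical proof is to invoke that work directly. If one wants to see why it holds, I would rerun the free-energy lower bound of \cite{haberberger2024freeenergydilutebose, freeenergydilutebose} with the extra term $-\eta\frac{n_+}{L^2}$ carried along, and show that it is absorbed into the errors already present. The guiding heuristic is a scale comparison. On $\Lambda_L$ with Neumann conditions the excitations with momentum $p\in\frac{\pi}{L}\mathbb{N}_0^3\setminus\{0\}$ have Bogoliubov dispersion $\sqrt{p^4+16\pi\rho a p^2}\geq \sqrt{16\pi\rho a}\,\frac{\pi}{L}$. Since $L=a(\rho a^3)^{-\frac12-\eta}$ is much larger than the healing length $(\rho a)^{-1/2}$, we have $\frac{\eta}{L^2}\ll \frac{\sqrt{\rho a}}{L}$. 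So a shift of size $\eta/L^2$ per excited particle should be a relative perturbation of order $(\rho a L^2)^{-1/2}=(\rho a^3)^{\eta}$ of each excitation energy.

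\textbf{Why the crude approach fails.} I would not simply borrow kinetic energy via $-\Delta\geq \frac{\pi^2}{L^2}Q$, writing $-\Delta-\eta\frac{Q}{L^2}\geq (1-\frac{\eta}{\pi^2})(-\Delta)$. This loses a fixed fraction of the kinetic energy, and that alters the thermal term. The thermal term is of order $T^{5/2}L^3$, which exceeds the allowed error $\rho a N(\rho a^3)^{\frac12+\eta}$ when $T\sim\rho a(\rho a^3)^{-\eta}$. Instead the shift must be absorbed mode by mode after the Bogoliubov diagonalisation.

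\textbf{Plan.} The steps follow the order of the proofs in the cited works.
\begin{enumerate}
\item Localise $\Lambda_L$ into Gross--Pitaevskii-scale Neumann boxes and apply the existing a priori bounds, which give the energy to leading order and a preliminary condensation bound $\langle n_+\rangle\ll N$. On these boxes the term $-\eta\frac{n_+}{L^2}$ costs at most $\eta\frac{N}{L^2}$. Since $\rho a L^2\gg 1$, this is at most $\rho a N(\rho a^3)^{1+2\eta}$, which is negligible.
\item On the full box, perform the c-number substitution and reduction to a quadratic Bogoliubov Hamiltonian $\sum_p\big(p^2+8\pi\rho a-\frac{\eta}{L^2}\big)a_p^\dagger a_p+\dots$ in the Neumann basis.
\item Diagonalise with the shifted dispersion $e_\eta(p)=\sqrt{(p^2-\eta L^{-2})^2+16\pi\rho a(p^2-\eta L^{-2})}$. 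Compare $-T\sum_p\log(1-e^{-\beta e_\eta(p)})$ and the ground-state shifts with the unshifted ones. Using $|e_\eta(p)-e(p)|\lesssim \frac{\eta}{L^2}\frac{e(p)}{p^2}$, both differences are bounded by $C(\rho a^3)^{\eta}$ times the respective term, hence by $C\rho a N(\rho a^3)^{\frac12+\eta}$.
\item Replace the Riemann sum over Neumann momenta by the integral in the statement, with boundary and discretisation errors of lower order as in \cite{haberberger2024freeenergydilutebose}.
\end{enumerate}

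\textbf{Main obstacle.} I expect the main difficulty to be the lowest modes $|p|\sim L^{-1}$ at temperature $T\sim\rho a(\rho a^3)^{-\eta}$. There $\beta e(p)$ is small, so the thermal occupation is large, of order $T/e(p)$. One must ensure that lowering the gap by $\eta/L^2$ neither destroys positivity of the quadratic form nor the a priori control of the cubic and quartic terms, which the cited argument obtains from the condensate bound. I would handle this by separating out the few modes with $|p|\leq K/L$ and treating them with the relative bound $\eta L^{-2}\leq (\rho a^3)^{\eta}e(p)$. The remaining modes are handled by the standard argument, in which the shift is harmless.
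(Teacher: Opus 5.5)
As a proof, your first sentence is the same as the paper's: the paper quotes this theorem from \cite{freeenergydilutebose} and gives no proof of its own. The sketch you add to explain why it holds does not work, however, because its key estimates are wrong.

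In Step 1, $L^{-2}=\rho a(\rho a^3)^{2\eta}$, so $\eta N/L^2=\eta\,\rho aN(\rho a^3)^{2\eta}$, not $\rho aN(\rho a^3)^{1+2\eta}$. For $\eta<\frac12$ this is far larger than the allowed error $\rho aN(\rho a^3)^{\frac12+\eta}$. For $\eta<\frac14$ it even exceeds the Lee--Huang--Yang term. So the shift cannot be discarded using $n_+\le N$ or a soft bound $\langle n_+\rangle\ll N$. If it could, the statement would follow trivially from the unshifted one. The estimate $\Tr(n_+\Gamma_0)\le CN(\rho a^3)^{\frac12-\eta}$ that the paper extracts from it would then be vacuous. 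Absorbing the shift requires $\langle n_+\rangle\lesssim N(\rho a^3)^{\frac12-\eta}$, which is essentially what the theorem encodes.

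In Steps 2--3 your guiding heuristic is also off. Lowering $A_p=p^2+8\pi\rho a$ by $\delta=\eta L^{-2}$ changes $e(p)=\sqrt{A_p^2-(8\pi\rho a)^2}$ by about $\delta A_p/e(p)$. At $p=\pi/L$ the factor $A_p/e(p)\sim\sqrt{\rho a}\,L$ makes this a relative change of order $\eta$, not of order $(\rho a^3)^{\eta}$. Your own bound $\frac{\eta}{L^2}\frac{e(p)}{p^2}$ gives $\eta/\pi^2$ there.

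The correct size of the effect comes from concavity of $\delta\mapsto -T\log\Tr e^{-\beta(H-\delta n_+)}$, whose derivative is $-\langle n_+\rangle$. So the shift lowers the free energy by at least $\frac{\eta}{L^2}\langle n_+\rangle$. For $T\gg\rho a$, the thermal depletion in your own Bogoliubov model is $\langle n_+\rangle\approx cT^{3/2}L^3$. The free energy therefore moves by about $c\,\eta LT^{3/2}$. At $T\sim\rho a(\rho a^3)^{-\eta}$ this is $\sim\eta\,\rho aN(\rho a^3)^{\frac12+\frac{\eta}{2}}$, a factor $(\rho a^3)^{-\eta/2}$ above the allowed error.

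Your ``hence'' in Step 3 also fails on its own terms, since $(\rho a^3)^{\eta}T^{5/2}L^3\sim\rho aN(\rho a^3)^{\frac12-\frac{3\eta}{2}}$. The real obstruction is therefore the bulk of thermally excited particles, not the few lowest modes. No mode-by-mode absorption gives the stated error once $T\gg\rho a(\rho a^3)^{-2\eta/3}$.

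This also suggests the quoted range $T\le\rho a(\rho a^3)^{-\eta}$, with the same $\eta$ in the gap and the error, should be checked against the exact hypotheses of \cite{freeenergydilutebose}. Either way, the proof rests on the citation, not on this sketch.
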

\begin{remark}
At zero temperature, one can track the parameter $\eta$ explicitly in \cite{freeenergydilutebose}. The optimal value obtained there is
	\begin{equation}
		\eta= \frac{1}{32}
	\end{equation}
\end{remark}
From the variational principle \cref{free energy definition}, we find
\begin{equation*}
	\Tr(H_N\Gamma_0)+T\Tr(\Gamma_0\log \Gamma_0)\geq -T\log Tr(e^{-\beta(H_N-\eta \frac{n_+}{L^2})})+\Tr(\eta\frac{n_+}{L^2}\Gamma_0)
\end{equation*}
Combining this with Theorem~\ref{main bose theorem} and a corresponding upper bound on the free energy \cite{HABERBERGER2026110825}, we obtain
	\begin{equation}
	\Tr (n_+ \Gamma_0)\leq   CL^2N\rho a (\rho a^3)^{\frac{1}{2}+\eta}\leq   CN (\rho a^3)^{\frac{1}{2}-\eta}.
	\end{equation}
This yields a strong condensation estimate on boxes whose side length is slightly larger than the Gross–Pitaevskii scale, which corresponds to $L=a(\rho a^3)^{-\frac{1}{2}}$.

We now apply Theorem~\ref{theorem hard} to extend this strong condensation estimate to a weak condensation estimate on larger length scales. This was similarly done in \cite{fournais2020lengthscalesbecdilute} for the periodic setting. 
\begin{corollary}\label{main corollary}
	Let $v,T, \eta$ and $C$ be given as in Theorem~\ref{main bose theorem}. Then for $R\geq a (\rho a^3)^{-\frac{1}{2}-\eta}$ we have
\begin{equation}
	\frac{\Tr (n_+ e^{-\beta H_N})}{\Tr (e^{-\beta H_N})}\leq CN\rho R^2a (\rho a^3)^{\frac{1}{2}+\eta}.
\end{equation}
with $H_N$ given as in \cref{hamiltonnian} defined on $L^2([0,R]^{3N})$ satisfying $N=\rho R^3$. 
\end{corollary}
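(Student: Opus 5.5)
We combine Theorem~\ref{theorem hard}, rescaled to $[0,R]^3$, with the free-energy bound of Theorem~\ref{main bose theorem} applied box by box, together with the free-energy upper bound of \cite{HABERBERGER2026110825} on the big box. Set $L=a(\rho a^3)^{-\frac12-\eta}$. Assume first that $R/L\in\mathbb N$ and put $\ell=L/R$; the cube sides $L_n=R\ell_n$ then range over $[L,2L]$. Rescaling \cref{nice inequality} to $[0,R]^3$ gives, for one particle,
\begin{equation*}
\frac{Q}{R^2}\leq \frac{C}{\ell}\sum_{n=0}^{\lfloor\frac{1}{2\ell}\rfloor}\frac{1}{L^2}\sum_k Q_{B_{L_n,k}},
\end{equation*}
since $\ell^{-1}\sum_n Q_{\ell_n}\ge C Q$ and $L_n\le 2L$. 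Summing over particles bounds $n_+/R^2$ by an average over the $\sim\ell^{-1}$ partitions of the localized excitation operators $n_+^{B}/L_n^2$. For general $R\ge L$ we replace $L$ by $R/\lceil R/L\rceil\in[L/2,L]$, which only changes constants.

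On the energy side, Neumann localization $-\Delta_{[0,R]^3}\ge\sum_k-\Delta_{B_{L_n,k}}$ and $v\ge0$ give, for every $n$, $H_N\ge\bigoplus_k H^{B_{L_n,k}}$ on the Fock-space decomposition over the boxes of partition $n$. Averaging this over $n$ and subtracting $\eta' n_+/R^2$ via the display above yields
\begin{equation*}
H_N-\eta'\frac{n_+}{R^2}\geq \frac{1}{\#n}\sum_n\sum_k\Big(H^{B_{L_n,k}}-\eta\frac{n_+^{B_{L_n,k}}}{L_n^2}\Big)
\end{equation*}
for a small constant $\eta'$. The next step is to lower bound $-T\log\Tr e^{-\beta(\cdot)}$ of the right-hand side. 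Since the boxes are disjoint within a partition, the partition function factorizes over boxes, with the particle number in each box free; this is a grand-canonical-type sum that must be controlled through the chemical potential and convexity in the local density. Averaging over $n$ uses concavity of $-T\log\Tr e^{-\beta H}$ in $H$, so the average of operators gives at least the average of the free energies. On each box we then apply Theorem~\ref{main bose theorem} with its local density. Its right-hand side is convex in the particle number and extensive, so summing over boxes reproduces the bulk formula on $[0,R]^3$ up to an error $CN\rho a(\rho a^3)^{\frac12+\eta}$. The cube sizes $L_n\in[L,2L]$ all lie in the admissible range of that theorem, possibly after adjusting $\eta$.

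Finally we use the variational principle in the same way as the displayed argument following the theorem. With the Gibbs state $\Gamma_0$ on $[0,R]^3$,
\begin{equation*}
\frac{\eta'}{R^2}\Tr(n_+\Gamma_0)\le \big(\text{free energy upper bound on }[0,R]^3\big)-\big(\text{lower bound above}\big)\le CN\rho a(\rho a^3)^{\frac12+\eta},
\end{equation*}
which gives $\Tr(n_+\Gamma_0)\le CN\rho R^2a(\rho a^3)^{\frac12+\eta}$.

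The hardest step is the localization of the free energy: fluctuating particle numbers across boxes, and matching the thermal term $L^3T^{5/2}\int\log(\cdot)$ between the boxes and the big box, including the boundary errors that arise at scale $L$. To handle these, I would first try to use superadditivity of the local lower bounds in the particle number, treating each box term as a convex function of $N_B$ so that Jensen's inequality reduces to uniform density $\rho$. I would then check that the upper bound of \cite{HABERBERGER2026110825} holds with error $O(N\rho a(\rho a^3)^{\frac12+\eta})$ on $[0,R]^3$.
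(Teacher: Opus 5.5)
Your proposal follows the paper's proof essentially step by step. You rescale Theorem~\ref{theorem hard} to trade $n_+/R^2$ for an average over the $\sim\ell^{-1}$ cube partitions of the local $n_+^{B}/L_n^2$, use Neumann localization and $v\geq 0$ to decouple the boxes, apply Theorem~\ref{main bose theorem} box by box with a chemical-potential/convexity argument forcing uniform density, and conclude with the upper bound of \cite{HABERBERGER2026110825} and the variational principle. Your explicit appeal to concavity of $H\mapsto -T\log\Tr e^{-\beta H}$, to pass the average over partitions through the free energy, makes precise a step the paper leaves implicit.

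One slip to fix: since $R^{-2}=\ell^2L^{-2}$, the rescaled inequality reads $\frac{Q}{R^2}\leq \frac{C\ell}{L^2}\sum_{n}\sum_k Q_{B_{L_n,k}}$, not $C/\ell$. With $C/\ell$ the constant $\eta'$ would have to be of order $\ell^2$. Your verbal description of it as an average over $\sim\ell^{-1}$ partitions already has the correct scaling.
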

\begin{remark}
	In much of the Bose gas literature, one fixes the length $L=1$ and considers
	\begin{equation*}
		H_N(\kappa)=\sum_{i=1}^{N}-\Delta_i+\sum_{i<j}N^{2-2\kappa}v(N^{1-\kappa}(x_i-x_j))
	\end{equation*}
acting on $L^2([0,1]^{3N})$. In this parametrization, $\kappa=0$ corresponds to the Gross–Pitaevskii regime, while $\kappa=\frac{2}{3}$ corresponds to the thermodynamic scaling. Corollary~\ref{main corollary} implies condensation of the Gibbs state whenever
\[\kappa \in [0,\frac{2+2\eta}{5+3\eta}].\]
\end{remark}
\begin{proof}
Let $L=a (\rho a^3)^{-\frac{1}{2}-\eta}$ as in Theorem~\ref{main bose theorem}. Then according to Theorem~\ref{theorem hard} we may write
	\begin{equation*}
		-\Delta =\frac{1}{\lfloor\frac{L}{2R}\rfloor+1}\sum_{n=0}^{\lfloor\frac{L}{2R}\rfloor}-\Delta-\varepsilon \frac{Q}{R^2}+\varepsilon \frac{Q}{R^2}\geq\frac{1}{\lfloor\frac{L}{2R}\rfloor+1}\sum_{n=0}^{\lfloor\frac{L}{2R}\rfloor} -\Delta_{L_n}-\varepsilon\frac{CQ_{L_n}}{L^2}+\varepsilon \frac{Q}{R^2}
	\end{equation*}
where $-\Delta_{L_n}$ and $Q_{L_n}$ are given as in \cref{definion of Qell} and we used the inequality;
\begin{equation*}
	-\Delta\geq -\Delta_{L_n}= \sum_{k\in \{0,1,...,\frac{R}{L_n}-1\}^3}-\Delta_{B_{L_n,k}},\qquad L\leq  L_n\leq 2L.
\end{equation*}
Neglecting positive interactions between different boxes, we also obtain the lower bound
\begin{equation*}
v(x_i-x_j)\geq v_{L_n}(x_i-x_j)=\sum_{k\in \{0,1,...,\frac{R}{L_n}-1\}^3}1_{B_{L_n,k}}(x_i) v(x_i-x_j)1_{B_{L_n,k}}(x_j).
\end{equation*}
Combining these estimates we obtain the lower bound on $H_N$
\begin{equation}\label{bound on hamiltonian}
	\begin{aligned}
		H_N-\varepsilon \frac{n_+}{R^2}&\geq \frac{1}{\lfloor\frac{L}{2R}\rfloor+1}\sum_{n=0}^{\lfloor\frac{L}{2R}\rfloor} \left(\sum_{i=1}^N-\Delta_{L_n,i}-\varepsilon\frac{CQ_{L_n,i}}{L^2}+\sum_{i<j}v_{L_n}(x_i-x_j)\right)\\&=:\frac{1}{\lfloor\frac{L}{2R}\rfloor+1}\sum_{n=0}^{\lfloor\frac{L}{2R}\rfloor}H_{N,L_n}.
	\end{aligned}
\end{equation}
We choose $\varepsilon$ such that
\[\varepsilon C\leq \eta,\]
which allows us to apply Theorem~\ref{main bose theorem} on the local Hamiltonian.

The remainder of the proof follows a standard argument based on super additivity of the entropy, convexity of the energy functional, and positivity of the interaction; see \cite{LY} for the zero-temperature case and \cite{freeenergydilutebose,haberberger2024freeenergydilutebose} for positive temperature. We briefly outline the argument for completeness.

We denote the free energy of $H_{N,L_k}$ as $F(N,L_{n})$ and the free energy of a box of size $L_n$ with $m$ particles as $F_{L_n}(m)$.  From the super additivity of the entropy we find for any $\mu>0$,
\begin{equation}\label{free energy something}
F(N,L_n)-\mu N\geq \inf_{\sum_i m_i=N}\sum_{i}F_{L_n}(m_i)-\mu m_i.
\end{equation}
At low density, Theorem~\ref{main bose theorem} implies that $F_{L_n}(m)$ is positive and essentially quadratic in $m$. By further subdividing particles if necessary to maintain low density, we obtain
\begin{equation*}
	F_{L_n}(m_i)-\mu m_i\geq 0\qquad m_i\geq 3\frac{\mu}{4\pi} L_n^3
\end{equation*}
where in the above we use that $\mu$ will be chosen proportionally to $\rho$. We can then throw away the positive terms in the sum \cref{free energy something}, and insert \cref{main bose equation} to find
\begin{align*}
	F(N,L_n)-\mu N\geq \inf_{\sum_i m_i=N}\sum_{i,m_i\leq 3\frac{\mu}{4\pi} L_n^3} &4\pi a\frac{m_i^2}{L_n^3} (1+\frac{128}{15\sqrt{\pi}}\sqrt{\frac{m_i}{L_n^3} a^3}-C(\frac{m_i}{L_n^3} a^3)^{\frac{1}{2}+\eta})-\mu m_i\nonumber\\&+L_n^3\frac{T^{\frac{5}{2}}}{(2\pi)^3}\int_{\mathbb{R}^3}\log (1-e^{-\sqrt{p^4+\frac{16\pi p^2 \frac{m_i}{L_n^3}a}{T}}})dp.
\end{align*}
The right-hand side is convex in $m_i$, and choosing $\mu$ appropriately ensures that the minimum is attained at 
$m_i=\rho L_n^3$. This yields $\mu\sim 8\pi \rho(1+O(\sqrt{\rho a^3}))$. The infimum is thus achieved with each box having exactly $\rho L_n^3$ particles and combining the above with \cref{bound on hamiltonian} we find
\begin{align*}
		-T \log \Tr e^{-\beta (H_N-\eta\frac{n_+}{R^2})}&\geq 4\pi \rho a N(1+\frac{128}{15\sqrt{\pi}}\sqrt{\rho a^3}-C(\rho a^3)^{\frac{1}{2}+\eta})\nonumber\\&+R^3\frac{T^{\frac{5}{2}}}{(2\pi)^3}\int_{\mathbb{R}^3}\log (1-e^{-\sqrt{p^4+\frac{16\pi p^2 \rho a}{T}}})dp.  
\end{align*}
As before, combining this lower bound with the corresponding upper bound \cite{HABERBERGER2026110825} and the variational principle yields the desired result.
\end{proof}
\textit{Acknowledgements}: This work was funded by the Deutsche Forschungsgemeinschaft (DFG,
German Research Foundation) – Project-ID 470903074 – TRR 352. The author is grateful for the motivation and valuable discussions with Arnaud Triay and Søren Fournais. 
\printbibliography
\vspace{1em}
\small
\noindent
Deparment of mathematics, Ludwig-Maximilians-Universität-München, Germany \\
\textit{Email:} lukas.junge@math.lmu.de
\end{document}